\newtheorem{lemma}{Lemma }
\newcommand{\bigsize}{\fontsize{16pt}{20pt}\selectfont}
\def\openone{\leavevmode\hbox{\small1\kern-3.3pt\normalsize1}}
\def\im{\mbox{Im\,}}
\def\ad{\mbox{ad\,}}
\begin{document}

\begin{center}
{\bf\bigsize
\scshape
Soliton equations related to the affine Kac-Moody algebra $D^{(1)}_{4}$}\\
\vspace{8mm}
V. S. Gerdjikov$^1$, D.M. Mladenov $^{2}$, A.A. Stefanov $^{2,3}$, S.K. Varbev $^{2,4}$\\[6mm]
\small{
$^{1}$ Institute of Nuclear Research and Nuclear Energy,
Bulgarian Academy of Sciences,
72 Tsarigradsko chausse, Sofia 1784, Bulgaria. \\[3mm]
$^{2}$ Theoretical Physics Department, Faculty of Physics,
Sofia University "St. Kliment Ohridski",
5 James Bourchier Blvd, 1164 Sofia, Bulgaria. \\[3mm]

$^{3}$ Institute of Mathematics and Informatics,
Bulgarian Academy of Sciences,
Acad. Georgi Bonchev Str., Block 8, 1113 Sofia, Bulgaria.

$^{4}$ Institute of Solid State Physics,
Bulgarian Academy of Sciences, 72 Tzarigradsko chaussee, Sofia 1784, Bulgaria. \\[3mm]
}
\vspace{8mm}
\end{center}

\begin{abstract}
We have derived the hierarchy of soliton equations associated with the untwisted affine Kac-Moody algebra $D^{(1)}_{4}$ by calculating the corresponding recursion operators.
The Hamiltonian formulation of the equations from the hierarchy is also considered.
As an example we have explicitly presented the first non-trivial member of the hierarchy, which is an one-parameter family of mKdV equations.
We have also considered the spectral properties of the Lax operator and introduced a minimal set of scattering data.
\end{abstract}

\section{Introduction}
\label{intro}
After the discovery of the inverse scattering method (ISM) in 1967  by the Princeton group of  Gardner, Greene, Kruskal and
Miura \cite{GGKM} the study of soliton equations and the related integrable nonlinear evolution equations (NLEEs)  became one of the most active branches of nonlinear science.
In this pioneering paper, the Korteveg-de Vries (KdV) equation was exactly integrated using the ISM.
One year later, Peter Lax presented a method, which would later be instrumental in the search of new integrable equations different from the KdV\cite{Lax}.
This started numerous attempts to extend the range of application of the ISM to other equations.

Three years had passed, when Gardner started the canonical approach to the NLEEs.
He was the first to realize that the KdV equation can be written in a Hamiltonian form \cite{Gardner}.
Soon after that, Zakharov and Faddeev proved that KdV equation is a completely integrable Hamiltonian system \cite{Zakharov-Faddeev}
and Wadati solved the modified Korteweg-de Vries (mKdV) equation \cite{Wadati-MKdV}.

This explosion of interest in the soliton science led to the discovery of a completely new world of unknown before integrable equations
together with methods and techniques for finding their exact solutions.
We should mention some of those methods and techniques, starting with the fundamental analytic solution (FAS), which was introduced by Shabat \cite{Sha}.
Within this approach one can show that the ISM is equivalent to a multiplicative Reimman-Hilbert problem (RHP),
which is the basis for the dressing method invented by Zakharov and Shabat \cite{ZakharovShabat1,ZakharovShabat2}.
Another important result was derived by Ablowitz, Kaup, Newell and Segur \cite{AKNS}.
They showed that the ISM can be  interpreted as a generalized Fourier transform.
This approach is based on the Wronskian relations, which are basic tools for the analysis of the mapping between the potential  and the scattering data.
They allow the construction of the "squared solutions" of the Lax operator, which form a complete set of functions and play the role of the generalized exponentials (see \cite{IP2,GeYa*94,GeYaV} and the references therein).
Another powerful technique we should mention here is that of the so called recursion operators.
This method has in principal a long history but the application to the NLEEs started in
\cite{Gardner-Greene-Kruskal-Miura} (see also \cite{McKean-vanMoerbeke} and \cite{Its-Matveev})
where with the help of the recursion operator the KdV equation was written in a compact form.
Within the study of the Nonlinear Schr\"{o}dinger (NS) equation, considered in the rapidly decreasing case and imposing the requirement
that the squared Jost solutions of the auxiliary linear problem are its eigenfunctions, the recursion operator was first introduced in \cite{AKNS}.
The case of the general matrix linear first order differential operator was studied in \cite{Newell-RNM}.
We should also mention the Tzitzeica equation \cite{Tzitzeica,VKT}, which was rediscovered as a soliton equation at the end of the 1970s.
Its Lax pair was discovered by Mikhailov \cite{Mikhailov} who demonstrated, that Tzitzeica equation has a hidden $\mathbb{Z}_3$-symmetry, which becomes evident at the Lax pair level.
This led Mikhailov to the discovery of the reduction group, an important concept in the Lax representation of soliton equations.
He also solved the family of 2-dimensional Toda field theories (of which Tzitzeica equation is a particular example) by using the dressing method.
The further development of all these methods and techniques and also their numerous applications can be found for example in
\cite{Ablowitz-Clarkson,Ablowitz-Segur,Bullough-Caudrey,Calogero-Degasperis,Dodd-Eilbeck-Gibbon-Morris,Faddeev-Takhtadjan,Lamb,Newell-book,NMPZ,X1}
 and more recently in \cite{X2,VG-Ya-13,VG-Ya-14}.
For a brief introduction to the very intriguing history of the solitons and the ISM see, for instance,
\cite{Allen-SolitonHistory,Miles-SolitonHistory}
and also
\cite{Bullough-Caudrey,Dodd-Eilbeck-Gibbon-Morris,Newell-book}.

The theory of soliton equations deals with a class of nonlinear partial differential equations that admit exact solutions.
That is why it is natural to study the soliton equations from a group-theoretical point of view and to uncover the algebraic structures,
which are behind them.
Investigations clarifying the algebraic structure of several classes of NLEEs were, for the first time, started in the early 1980s in the Research Institute for Mathematical Sciences by the Kyoto group.
Namely, M. Sato and Y. Sato in their studies discovered the very remarkable fact that there is a transitive action of
an infinite-dimensional group on the manifold of solutions of the Kadomtsev-Petviashvili (KP) hierarchy and the corresponding
Lie algebra is $\mathfrak{gl}(\infty)$ \cite{Sato,Sato-Sato}.

Another very important investigation, which shows deep and fundamental links between NLEEs and infinite-dimensional
Lie groups and Lie algebras was done by Drinfeld and Sokolov \cite{Drinfeld-Sokolov-1981,Drinfeld-Sokolov-1985}.
In their papers, starting from an affine Kac-Moody algebra and applying the method of Hamiltonian reduction, they obtained
a one-parameter family of Poisson brackets, which are defined on some infinite-dimensional Poisson manifold,
and constructed an integrable hierarchy of bi-Hamiltonian equations.
In their studies they found most of the mKdV equations related to the low dimensional cases,
for example, when the affine Kac-Moody algebra is $A^{(1)}_{1} \simeq \hat{\mathfrak{sl}}_2$
the resulting equation is the well-known mKdV equation. Within this scheme, Casati, Della Vedova and Ortenzi
\cite{Casati-DellaVedova-Ortenzi}
found in an explicit form the hierarchy, which is related to the affine Kac-Moody algebra $G_2^{(1)}$, together with the bi-Hamiltonian structure
of the obtained soliton equations.

These investigations were followed by Date, Jimbo, Kashiwara and Miwa with a sequance of seminal papers
\cite{Date-Jimbo-Kashiwara-Miwa} (see also \cite{Jimbo-Miwa-Date-Solitons} and references there in) in which
the representation theory of affine Kac-Moody algebras is used to obtain KP and KdV hierarchies, for example,
the KdV and KP hierarchies are related correspondingly to the affine Lie algebras $A^{(1)}_{1}$ and $A_{\infty}$.
A very important side of these investigations is that they put on very general grounds the use
of the theory of affine Lie algebras and their corresponding Lie groups in finding the hidden symmetries of soliton equations.

The relation between infinite-dimensional Lie algebras, and their groups and soliton equations has inspired a huge
number of further investigations in which this link has been widely extended and generalized in many different senses and directions.
In particular, Segal and Wilson elaborated a construction, which associates with every solution of the KdV equation a point of
some infinite-dimensional Grassmannian \cite{Segal-Wilson}.
They show that the geometry of the Grassmannian reflects to the properties of the exact solutions.
Within this powerful approach the explicit algebro-geometric solutions are naturally described and the geometrical meaning of the $\tau$-function is clarified.
Kac and Wakimoto \cite{Kac-Wakimoto} constructed a scheme in which vertex operator representations of some affine Lie algebras are used to obtain hierarchies of NLEEs.
In the papers \cite{deGroot-Hollowood-Miramontes,Burroughs-deGroot-Hollowood-Miramontes}
Burroughs, de Groot, Hollowood and Miramontes used a general approach, with the help of which they constructed
integrable hierarchies of NNLEs, which in turn are related  to untwisted Kac-Moody algebras.
The generalized KdV hierarchies of Drinfeld and Sokolov can be naturally included within this approach and are obtained as some special cases of the general scheme.
They also include most of the generalizations of the original Drinfeld-Sokolov construction available in the literature.
The extent of this class of hierarchies by means of (matrix) pseudo-differential operators has been investigated by
Feh\'{e}r, Harnad, Marshall, Delduc and Gallot
\cite{Delduc-Feher,Delduc-Feher-Gallot,Fehеr-Harnad-Marshall,Feher-Marshall}
(see also \cite{Aratyn-Ferreira-Gomes-Zimerman}).
Their construction and its generalizations provide a systematic approach to the study and classification of many integrable hierarchies
described by means of pseudo-differential Lax operators.
For a brief but very informative review of some of these results see for example \cite{Fehеr}.
An equivalent construction to the generalized Drinfeld-Sokolov hierarchies was proposed by Feigin and Frenkel \cite{Feigin-Frenkel-1,Feigin-Frenkel-2}.
In this approach, instead of the hamiltonian flows of the mKdV hierarchy, the main technical tool is given by some properly chosen vector fields
defined on the space of jets of the potential of the Lax operator. This potential in turn belongs to the Cartan subalgebra of the corresponding
affine Kac-Moody algebra.
Enriquez and Frenkel \cite{Enriquez-Frenkel} proved that this construction is
equivalent to that proposed previously by Drinfeld and Sokolov.

The present paper can be considered as a natural continuation on the investigations of the relation between Kac-Moody algebras and integrable equations started in our previous works \cite{GMSV1,GMSV2,GMSV3,GMSV4}.
In particular, there we have analyzed the family of integrable nonlinear third order partial differential equations associated with the untwisted affine Kac-Moody algebras $A^{(1)}_{r}$, which represent the third member of the corresponding hierarchy.
We also started investigations concerning the affine Kac-Moody algebra $D_4^{(1)}$.

The $D_4\simeq so(8)$ is unique among the simple Lie algebras, because  $D_4$ is the only algebra with an $S_3$ symmetry of its Dynkin diagram.
It is also the only simple Lie algebra that has 3 as a double-valued exponent.
For these reasons, the main purpose of this paper is to derive directly
the hierarchy of soliton equations related to the untwisted affine Kac-Moody algebra $D_4^{(1)}$ by calculating the corresponding recursion operators. This is an alternative approach to the one used in
 \cite{Drinfeld-Sokolov-1981,Drinfeld-Sokolov-1985}.
In particular, we will pay special attention to the one-parameter family of mKdV equations.
We also formulate the spectral properties of the Lax operators as well as the FAS of the system.

The paper is organized as follows.
In Section 2 we give some basic facts concerning Kac-Moody algebras and present an explicit way of constructing a basis in the algebra $D_4^{(1)}$.
In Section 3 the Lax pair corresponding to the hierarchy is defined and then analyzed and the recursion operators giving the hierarchy of soliton equations are derived.
In Section 4, following \cite{Drinfeld-Sokolov-1981,Drinfeld-Sokolov-1985},
 we give the Hamiltonian formulation of the obtained soliton equations along with
the general formula for the corresponding Hamiltonians.
In Section 5 we derive  explicitly the first non-trivial member of the hierarchy, which is a one-parameter family set of mKdV type equations.
In Section 6 we formulate the spectral properties of the Lax operators and the corresponding RHP. We construct two minimal sets of scattering data
and prove that they allow one to reconstruct both the scattering matrix and the potential of the Lax operator.
Section 7 contains concluding remarks.
In the Appendix we present some properties of the simple Lie algebra $D_4$ and show an explicit construction of the dihedral realization of the Coxeter automorphism.
Here, we also give the explicit form of the basis of $D_4^{(1)}$, which is used in the calculations. We also show the explicit calculation of $\mbox{ ad}_J^{-1}$, a step
that is required for the calculation of the recursion operators.

\section{Preliminaries}
\label{prelim}
A Lie algebra
\footnote{
For a comprehensive treatment concerning finite-dimensional Lie algebras,
as well as infinite-dimen\-sio\-nal Lie algebras of Kac-Moody type, see 
\cite{Carter,Goto-Grosshans,Helgasson,Jacobson,Kac}.
A collection of articles can be found in
\cite{Sthanumoorty-Misra}.
}
 is called simple if it is non-Abelian and contains no nontrivial ideals.
 A semisimple Lie algebra is the direct sum of finite number of simple Lie algebras.
 We will consider algebras over the field of complex numbers $\mathbb{C}$.

Let $\mathfrak{g}$ be a finite-dimensional simple Lie algebra.
By $\mbox{ad}_{X}$ we denote the linear operator defined by
\begin{equation}
\mbox{ad}_{X}(Y) = \big[X,\, Y \big], \quad X,Y \in \mathfrak{g}.
\end{equation}
This operator has a kernel and can only be inverted on its image.
We denote that inverse by $\mbox{ad}_X^{-1}$.
If $X$ is diagonalizable then $\mbox{ad}_X^{-1}$ can be expressed as a polynomial of $\mbox{ad}_X$.
Let  $\big\langle\; ,\; \big\rangle$ be the Killing-Cartan form on $\mathfrak{g}$, which is  defined by
\begin{equation}
\big\langle X ,Y \big\rangle = \mbox{tr} \left( \mbox{ad}_{X} \mbox{ ad}_{Y} \right).
\end{equation}
Since any invariant symmetric bilinear form on $\mathfrak{g}$ is proportional to the Killing form we will use the form given by
\begin{equation}
\big\langle X ,Y \big\rangle = \mbox{tr} \left( {X}{Y} \right).
\end{equation}

Let $\varphi$ be an automorphism of $\mathfrak{g}$ of finite order.
If it is of the form
\begin{equation}
\varphi(X)= e^{F}Xe^{-F}
\end{equation}
for some generator $F$ then it is called an inner automorphism.
Every automorphism that is not inner is an outer automorphism.
The set of outer automorphisms of $\mathfrak{g}$ is equivalent, up to a conjugation with an inner automorphism,
to the symmetries of the Dynkin diagram of $\mathfrak{g}$.
Every finite-order automorphism $\varphi$ introduces a grading in $\mathfrak{g}$ by
\begin{equation}
\mathfrak{g}=\mathop{\oplus}\limits_{k=0}^{s-1} \mathfrak{g}^{(k)},
\end{equation}
such that
\begin{equation}
\label{grading}
\big[ \mathfrak{g}^{(k)}, \mathfrak{g}^{(l)} \big] \subset \mathfrak{g}^{(k+l)}, \quad \varphi(X) = \omega^k X, \quad  \omega = \exp \left(\frac{2 \pi i}{s} \right),\quad  \forall X \in \mathfrak{g}^{(k)},
\end{equation}
where $s$ is the order of $\varphi$ and $k+l$ is taken modulo $s$.
If $\mathfrak{g}$ is a finite-dimensional Lie algebra over $\mathbb{C}$ then
\begin{equation}
\begin{aligned}
\mathfrak{g}[\lambda, \lambda^{-1}]&=\left \{ \sum_{i=n}^{m} v_i \lambda^i : v_i \in \mathfrak{g} , n,m \in \mathbb{Z} \right \}, \\
f(\lambda) &= \left \{ \sum_{i=0}^{m} f_i \lambda^i : f_i \in \mathfrak{g} , m \in \mathbb{Z} \right \}.
\end{aligned}
\end{equation}
There is a natural Lie algebraic structure on $\mathfrak{g}[\lambda, \lambda^{-1}]$.
Let $\varphi$ be an automorphism of $\mathfrak{g}$ of order $s$. Then
\begin{equation}
L(\mathfrak{g}, \varphi) =  \left \{ f \in \mathfrak{g}[\lambda, \lambda^{-1}]: \varphi(f(\lambda))= f \left( \lambda \exp \left({\frac{2\pi i}{s}}\right) \right) \right \}.
\end{equation}
$L(\mathfrak{g}, \varphi)$ is a Lie subalgebra of $\mathfrak{g}[\lambda, \lambda^{-1}]$.
If $\mathfrak{g}$ is simple then $ L(\mathfrak{g}, \varphi)$ is called a Kac-Moody algebra.
It is obvious that Kac-Moody algebras are graded algebras. Note that commonly the central extension of $L(\mathfrak{g}, \varphi )$ is called a Kac-Moody algebra.
The definition given above is the one used in \cite{Drinfeld-Sokolov-1981,Drinfeld-Sokolov-1985}.
Every automorphism $\varphi$ can be uniquely represented in the form $\varphi = f \circ \varphi_{\tau}$ where $\varphi_{\tau}$ is given by
\begin{equation}
\varphi_{\tau} (E_{\alpha_i}) = E_{\tau(\alpha_i)}
\end{equation}
and $\tau$ is a permutation of the simple roots that preserves the symmetry of the Dynkin diagram of $\mathfrak{g}$, i.e. it is an automorphism of the Dynkin diagram. The order of $\varphi_{\tau}$ is called the height of $L(\mathfrak{g}, \varphi)$.
As is shown in \cite{Kac}, two Kac-Moody algebras $L(\mathfrak{g_1}, \varphi_1), L(\mathfrak{g_1}, \varphi_1)$ are equivalent if $\mathfrak{g_1}$ is isomorphic to $\mathfrak{g_2}$, and the automorphisms of the Dynkin diagram determined by $\varphi_1$ and $\varphi_2$ are conjugate.

An automorphism $C$ of a simple Lie algebra is said to be a Coxeter automorphism if the eigen-space that corresponds to
an egeinvalue $1$ is Abelian and $C$ is of minimal order.
The order of $C$ is called the Coxeter number of the algebra.

In this paper we will consider the equations related to $ L(D_4, C)$, where $C$ is a Coxeter automorphism of $D_4$.
We will use the standard notation $D_4^{(1)}$ \cite{Carter}.
The Coxeter number of $D_4^{(1)}$ is $6$ and its exponents are $1,3,5,3$ \cite{Drinfeld-Sokolov-1981,Carter}.
The Coxeter automorphism is realized as
\begin{equation}
C(X) = cXc^{-1},
\end{equation}
where $c$ is given in the appendix (see \eqref{coxeter1}).
It introduces a grading in $D_4$ via \eqref{grading}.
A basis compatible with this grading is given by averaging the Cartan-Weyl basis over the action of the Coxeter automorphism
\begin{equation}
 \mathcal{E}_{i}^{(k)} =  \sum_{s=0}^{5} \omega^{-s k} C^{s} ( E_{\alpha_i}), \qquad  \mathcal{H}_j^{(k)} = \sum_{s=0}^{5} \omega^{-s k} C^{s} ( H_{j}). \label{basis1}
\end{equation}
Note that $\mathcal{H}_j^{(k)}$ is non-vanishing only if $k$ is an exponent.
Since $3$ is a double-valued exponent, $ \mathfrak{g}^{(3)}$ will contain two Cartan elements.
The explicit form of the basis is given in the appendix.

\section{Lax pair and recursion operators}
\label{lax pair}
We will derive the hierarchy of equations related to a Lax pair that is subject to a $\mathbb{Z}_h$ reduction \cite{Mikhailov}
\begin{equation} \label{LP12}\begin{aligned}
&L\psi \equiv i\partial_x \psi + U(x,t,\lambda)\psi = 0, \\
&M\psi \equiv i\partial_t \psi + V(x,t,\lambda)\psi = \psi \Gamma(\lambda),
\end{aligned}
\end{equation}
where $\Gamma(\lambda)$ is given in Subsection \ref{sdata}. The potentials must satisfy
\begin{equation}
C\left( U(x,t,\lambda) \right) = U(x,t,\omega\lambda),\quad C\left( V(x,t,\lambda) \right) = V(x,t,\omega\lambda).
\end{equation}
with $\omega =e^{\frac{2\pi i}{6}} $. This implies that the potentials are elements of the Kac-Moody algebra $D_4^{(1)}$. A Lax pair compatible with the above reduction is given by
\begin{equation}
\label{LaxPair}
\begin{aligned}
L &= i \partial_x + Q(x,t) - \lambda J, \\
M &= i \partial_t  + \sum_{k=0}^{n-1} \lambda^k V^{(k)}(x,t) - \lambda^{n}K,
\end{aligned}
\end{equation}
where
\begin{equation}
\label{potential}
Q(x,t) \in \mathfrak{g}^{(0)}, \quad V^{(k)}(x,t) \in \mathfrak{g}^{(k)}, \quad K \in \mathfrak{g}^{(n)}, \quad J\in \mathfrak{g}^{(1)}.
\end{equation}
For simplicity, we assume that $n \leq h$ and $n$ is an exponent of $D_4^{(1)}$. If $n$ is not an exponent the resulting equations are trivial and if $n > h$ then every index should be understood modulo $h$.
The explicit form of the potential $Q$ is
\begin{equation}
Q(x,t)=\sum_{i=1}^{r} q_i(x,t)\mathcal{E}_{i}^{(0)}.
\end{equation}
To simplify the notation we will omit writing any explicit dependence on $x$ or $t$.

The elements $J$ and $K$ are some properly chosen constant matrices. In our case
\begin{equation}
\label{J}
J =\frac{1}{2} \mathcal{H}^{(1)}_1  = \mbox{diag}(1,\omega,\omega^5,0,0,-\omega^5,-\omega,-1)
\end{equation}
and $K$ is chosen so that $ \big[ J, K \big] = 0$.

Our choise of $J$ fixes the inverse of $\mbox{ad}_J$ to (for more details, see the Appendix)
\begin{equation}
\mbox{ad}_J^{-1}=\frac{1}{27} \left(26 \mbox{ ad}_J^{5}+ \mbox{ad}_J^{11} \right).
\end{equation}

The Lax pair must commute, i.e.
\begin{equation}
\label{LPC}
\big[L,\, M \big]=0
\end{equation}
for every $\lambda$.
This implies the following recursion relations
\begin{equation}
\begin{aligned}
&\lambda^{n+1}: & \big[ J, K \big]&=0, \\
&\lambda^{n}:    & \big[ J,V^{(n-1)} \big] + \big[ Q, K \big]&=0, \\
&\lambda^{s}:     &  i{\partial_x V^{(s)}}+ \big[Q ,V^{(s)} \big]- \big[ J,V^{(s-1)} \big] &=0, \\
&\lambda^{0}:    &  - i{\partial_t Q}+  i{\partial_x V^{(0)}}+  \big[ Q(x,t),V^{(0)} \big]&=0. \\
\end{aligned} \label{Recurrence1}
\end{equation}
Each element splits into "orthogonal" and "parallel" parts
\begin{equation}
\begin{aligned}
V^{(s)} &=V^{(s)}_{\bot}  + V^{(s)}_{\|} , \quad \mbox{ad}_J \left(V^{(s)}_{\|} \right)=0, \\
V^{(s)}_{\|}(x,t)&=
 \begin{cases}
0 &\mbox{if } s \mbox{ is not an exponent,} \\
\sum_{j=1}^{k} c_{s, j}^{-1}\mathcal{H}^{(s)}_j \left< V^{(s)}, \mathcal{H}^{(h-s)}_j \right> &\mbox{if } s \mbox{ is an exponent, }
\end{cases}
\end{aligned}
\end{equation}
where $c_{s , j} = \left< \mathcal{H}^{(s)}_j ,\mathcal{H}^{(h-s)}_j \right>$  and $k$ is the multiplicity of the exponent.
From \eqref{Recurrence1} we can see that
\begin{equation}
\begin{aligned}
V^{(s-1)}_{\bot} &= \mbox{ad}_{J}^{-1}\left( i{\partial_x V^{(s)}}+ \big[ Q ,V^{(s)}_{\bot} \big] + \big[ Q ,V^{(s)}_{\|} \big]  \right), \\
i {\partial_x V^{(s)}_{\|}} &=- \big[ Q ,V^{(s)}_{\bot} \big]_{\|}.
\end{aligned}
\end{equation}
Integrating the second equation we get
\begin{equation}
\begin{aligned}
 V^{(s)}_{\|} &=
i \partial_x^{-1} \left[Q, V^{s}_{\bot} \right]_{\|} =
i \sum_{j=1}^{k} c_{s, j}^{-1} \mathcal{H}_j^{(s)}  \partial_x^{-1} \left< \left[Q ,V^{(s)}_{\bot} \right], \mathcal{H}_j^{(h-s)} \right>,
\end{aligned}
\end{equation}
where we have set any constants of integration to be equal to zero.
Thus the solution of the recurrent relations takes the form
\begin{equation}
\begin{aligned}
V^{(n-1)}&= -  \mbox{ad}_{J}^{-1}  \left[Q , K \right], \\
V^{(s-1)}_{\bot}  & =  \Lambda_s V^{(s)}_{\bot} .
\end{aligned}
\label{RecOp}
\end{equation}
There are two possible cases for the recursion operators $\Lambda_s$.
If $s$ is an exponent then
\begin{equation}
\begin{aligned}
\Lambda_{s} X &= \mbox{ad}_{J}^{-1}\Bigg( i{\partial_x X}+ \left[ Q , X \right] + i \sum_{j=1}^{k} c_{s, j}^{-1} \left[ Q ,\mathcal{H}^{(s)}_j \right] \partial_x^{-1} \left< \left[Q , X \right], \mathcal{H}^{(h-s)}_j \right>  \Bigg),
\end{aligned}
\end{equation}
where $k$ is the multiplicity of the exponent $s$.
If $s$ is not an exponent then
\begin{equation}\label{Lambda-s}\begin{aligned}
\Lambda_{s} X &= \mbox{ad}_{J}^{-1}\left( i{\partial_x X}+ \left[ Q , X \right] \right).
\end{aligned}
\end{equation}
The hierarchy of soliton equations is given by
\begin{equation}\begin{split}
\label{Hierarchy}
{\partial_t Q} &= {\partial_x} \left( \Lambda_{2} \ldots \Lambda_{n-1} V^{(n-1)} \right) \\
&= \frac{\partial }{ \partial x }\left(  \Lambda_{2} \ldots \Lambda_{n-1} \ad_J^{-1}[K,Q] \right).
\end{split}\end{equation}
%
\section{Hamiltonian formulation}
\label{ham}
Every equation in the hierarchy \eqref{Hierarchy}  has infinitely many integrals of motion.
This is due to the underlying bi-Hamiltonian structure.
Every integral of motion can be viewed as a Hamiltonian with a properly chosen Poisson structure.
We will use the integral of motion given by \cite{VG-Ya-13,VG-Ya-14}
\begin{equation}
I = \int_{{-\infty}}^{\infty} i c_{5,1}  \partial_x^{-1} \left< \left[ Q,  \Lambda_{0} V^{(0)} \right], \mathcal{H}^{(1)}_1 \right>{dx}.
\label{I1}
\end{equation}
The Hamiltonian $H$ is proportional to \eqref{I1}.
Hamilton's equations are
\begin{equation}
\partial_{t} q_i = \{ q_i, H \}
\label{HamEq}
\end{equation}
with a Poisson bracket given by
\begin{equation}
\{F,G\} = \int_{\mathbb{R}^2} \omega_{ij}(x,y) \frac{\delta{F}}{\delta{q_i}}\frac{\delta{G}}{\delta{q_j}} {dx} {dy},
\end{equation}
where we sum over repeating indexes.
The Poisson structure tensor is
\begin{equation}
\omega_{ij}(x,y)= \frac{1}{2} \delta_{ij} \left( \partial_{x} \delta(x-y) - \partial_{y} \delta(x-y) \right).
\end{equation}
In this case \eqref{HamEq} reduces to
\begin{equation}
\partial_t q_i = \partial_x \frac{\delta{H}}{\delta{q_i}}.
\label{HamEq2}
\end{equation}

\section{The one-parameter family of mKdV equations}
\label{example}
The first non-trivial member of the hierarchy is a set of mKdV equations which is obtained by \eqref{Hierarchy} and setting $n=3$.
Here, for the sake of completeness,  we will
derive explicitly the potential of the $M$ operator, which in our particular case
is a qubic polynomial in $\lambda$.

As we mentioned above, the algebra $D_4$ is the only simple Lie algebra which has 3 as a double exponent. This means that the
element $K$ in (\ref{LaxPair}) involves two arbitrary parameters:
\begin{equation}
K = \frac{1}{2} a \mathcal{H}^{(3)}_1 + \frac{1}{6}b \mathcal{H}^{(3)}_4.
\end{equation}
The elements $V^{(k)}$ are given as a linear combination of the basis elements in $\mathfrak{g}^{(k)}$ with some coefficients $v_i^{(k)}$.

The coefficients of $V^{(2)}$ are obtained by solving the first equation from \eqref{RecOp}
\begin{equation}
\begin{aligned}
v_1^{(2)} &= 2 \omega a q_1, \\
v_2^{(2)} &= 0, \\
v_3^{(2)} &= - \omega (a+b)q_3, \\
v_4^{(2)} &= - \omega (a-b)q_4.
\end{aligned}\label{V2solved}
\end{equation}

From the second equation of \eqref{RecOp}, by setting $s=2$, for $V^{(1)}$ we have
\begin{equation}
\begin{aligned}
v_1^{(1)} &= - 2 a (\omega +1) \frac{\sqrt{3}}{3} \left( \partial_x q_1 -  \sqrt{3}q_4 q_3 + \sqrt{3}q_2 q_1 \right), \\
v_2^{(1)} &= 2 a q_1^2 - (a+b) q_3^2 - (a-b) q_4^2, \\
v_3^{(1)} &= (a+b) (\omega +1) \frac{\sqrt{3}}{3} \left(  \partial_x q_3 - \sqrt{3}q_1 q_4 - \sqrt{3}q_2 q_3 \right), \\
v_4^{(1)} &= (a-b) (\omega +1) \frac{\sqrt{3}}{3} \left( \partial_x q_4 - \sqrt{3}q_1 q_3 - \sqrt{3}q_2 q_4 \right), \\
v_5^{(1)} &= a q_1^2 - \frac{1}{2}(a+b)q_3^2 - \frac{1}{2}(a-b)q_4^2.
\end{aligned} \label{V1solved}
\end{equation}
Note that $v_5^{(1)}$ can formally be interpreted as "particle density" and
\begin{equation}
D= \int_{-\infty}^{\infty}\left( a q_1^2 - \frac{1}{2}(a+b)q_3^2 - \frac{1}{2}(a-b)q_4^2 \right) dx
\end{equation}
is an integral of motion.

Using \eqref{RecOp} and setting $s=1$ ,for $V^{(0)}$ we get
\begin{equation}
\begin{aligned}
v_1^{(0)} &= 2a(\partial^2_x q_1 - \sqrt{3} q_1 \partial_x q_2)
- \sqrt{3}\left( (3a+b)q_4 \partial_x q_3 + (3a-b)q_3 \partial_x q_4 \right) \\
 &- 3q_1(2 a q_2^2 - (a-b)q_3^2 - (a+b)q_4^2),\\ \\
v_2^{(0)} &= \sqrt{3} a \partial_x q_1^2 - \frac{\sqrt{3}}{2} (a+b) \partial_x q_3^2 - \frac{\sqrt{3}}{2} (a-b) \partial_x q_4^2  \\
&- 3 q_2 \left(2 a q_1^2 - (a+b)q_3^2 - (a-b) q_4^2 \right) ,\\ \\
v_3^{(0)} &= -(a+b)(\partial^2_x q_3 - \sqrt{3} q_3 \partial_x q_2)
+ \sqrt{3} \left( (3a+b) q_4 \partial_x q_1 + 2 b q_1 \partial_x q_4 \right) \\
&- 3 q_3 \left(2 a q_4^2 - (a-b) q_1^2 - (a+b) q_2^2 \right) ,\\ \\
v_4^{(0)} &= -(a-b)(\partial^2_x q_4 - \sqrt{3} q_4 \partial_x q_2)
+ \sqrt{3} \left( (3a-b) q_3 \partial_x q_1 - 2 b q_1 \partial_x q_3 \right) \\
&- 3 q_4 \left(2 a q_3^2 - (a-b) q_2^2 - (a+b) q_1^2 \right) .
\end{aligned}
\end{equation}

Finally, the equations obtained from \eqref{Hierarchy}, are
\begin{equation*}
\begin{aligned}
\partial_t q_1 &= \partial_x \Big( 2a(\partial^2_x q_1 - \sqrt{3} q_1 \partial_x q_2)
- \sqrt{3}\left( (3a+b)q_4 \partial_x q_3 + (3a-b)q_3 \partial_x q_4 \right) \\
 &- 3q_1(2 a q_2^2 - (a-b)q_3^2 - (a+b)q_4^2) \Big),\\ \\
\partial_t q_2 &= \partial_x \Big( \sqrt{3} a \partial_x q_1^2 - \frac{\sqrt{3}}{2} (a+b) \partial_x q_3^2 - \frac{\sqrt{3}}{2} (a-b) \partial_x q_4^2  \Big) \\
&- 3 q_2 \left(2 a q_1^2 - (a+b)q_3^2 - (a-b) q_4^2 \right) ,\\ \\
\partial_t q_3 &= \partial_x \Big( -(a+b)(\partial^2_x q_3 - \sqrt{3} q_3 \partial_x q_2)
+ \sqrt{3} \left( (3a+b) q_4 \partial_x q_1 + 2 b q_1 \partial_x q_4 \right) \\
&- 3 q_3 \left(2 a q_4^2 - (a-b) q_1^2 - (a+b) q_2^2 \right) \Big) ,\\ \\
\partial_t q_4 &= \partial_x \Big( -(a-b)(\partial^2_x q_4 - \sqrt{3} q_4 \partial_x q_2)
+ \sqrt{3} \left( (3a-b) q_3 \partial_x q_1 - 2 b q_1 \partial_x q_3 \right) \\
&- 3 q_4 \left(2 a q_3^2 - (a-b) q_2^2 - (a+b) q_1^2 \right) \Big). \\
\end{aligned}
\end{equation*}
The density of the Hamiltonian given by \eqref{I1} is
 \begin{equation}\label{eq:H}
\begin{aligned}
\mathcal{H} &= \frac{1}{3} \left( 2 a q_1 \partial_x^2q_1 - (a+b) q_3 \partial_x^2 q_3 - (a-b) q_4 \partial_x^2 q_4 \right) \\
       &- \frac{1}{6} \left( 2a (\partial_x q_1)^2 - (a+b)(\partial_x q_3)^2 - (a-b)(\partial_x q_4)^2 \right) \\
       &+  2a \sqrt{3} \left( (q_4 q_3 + \frac{1}{3} q_2 q_1)\partial_x q_1  - \frac{1}{3} q_1^2 \partial_x q_2 \right) \\
       &- (a+b) \sqrt{3} \left( (q_4 q_1 - \frac{1}{3} q_2 q_3) \partial_x q_3 -\frac{1}{3}q_3^2 \partial_x q_2 \right) \\
       &- (a- b) \sqrt{3} \left( (q_3 q_1 - \frac{1}{3} q_2 q_4) \partial_x q_4 - \frac{1}{3}q_4^2 \partial_x q_2 \right) \\
       &- \frac{3}{2}\left( 2a (q_1^1 q_2^2 + q_4^2 q_32)-(a+b)(q_1^2 q_4^2 + q_2^2 q_3^2)-(a-b)(q_1^2 q_3^2  +q_2^2 q_4^2) \right)
\end{aligned}
\end{equation}
with Hamilton's equations given by \eqref{HamEq2}.

As we demonstrated above, the set of mKdV equations, as well as their Hamiltonian (\ref{eq:H})  involve two arbitrary parameters
$a$ and $b$. However, one of these parameters, say $a$, can always be put to 1 by re-scaling $t$ to $t/a$. That is why we have effectively
a one-parameter family of mKdV equations.

Note that for any other simple Lie algebra 3 is a single-valued exponent. As a result $K$ may involve only one parameter, which can always
be re-scaled to 1. Therefore the mKdV equations related to the other simple Lie algebras do not allow arbitrary parameters as their coefficients.

\section{Spectral properties of $L$ and fundamental analytic solutions}
\label{spectral}
Let us analyze the spectral properties of $L$ and outline the construction of its FAS. Basic tools in this analysis are the Jost solutions
\begin{equation}\label{eq:Jo}\begin{split}
\lim_{x\to -\infty}\phi_-(x,t,\lambda) e^{iJ\lambda x} =\openone, \qquad \lim_{x\to \infty}\phi_+(x,t,\lambda) e^{iJ\lambda x} =\openone.
\end{split}\end{equation}
Then the scattering matrix is introduced by
\begin{equation}\label{eq:T}\begin{split}
T(\lambda,t) = \hat{\phi}_+ ( x,t,\lambda) \phi_- ( x,t,\lambda),
\end{split}\end{equation}
where by "hat" we denote matrix inverse.
Formally the Jost solutions must satisfy Volterra type integral equations.
If we introduce
\begin{equation}
\label{eq:xipm}
\xi_\pm (x,t,\lambda) =\phi_\pm (x,t,\lambda) e^{i\lambda Jx},
\end{equation}

then $\xi_\pm (x,t,\lambda)$ must satisfy

\begin{equation}
\label{eq:xipm'}
\begin{split}
\xi_+(x,t, \lambda) &= \openone + i \int_{\infty}^{x} dy \; e^{-i\lambda J(x-y)} Q(y,t) \xi_+ (y,t,\lambda)  e^{i\lambda J(x-y)}, \\
\xi_-(x,t, \lambda) &= \openone + i \int_{\infty}^{x} dy \; e^{-i\lambda J(x-y)} Q(y,t) \xi_- (y,t,\lambda)  e^{i\lambda J(x-y)}. \\
\end{split}
\end{equation}

However, in our case $J$ is complex-valued and a simple analysis shows that the Jost solutions exist only for potentials on compact support.
Beals and Coifman showed that any potential can be approximated to arbitrary accuracy with a potential on a finite support \cite{Beals-Coifman}.
This result was later generalized to potential with coefficients from any finite-dimensional simple Lie algebra \cite{GeYa*94}.
Reformulating the general results of \cite{GeYa*94,VG-Ya-14}, for the particular choice of the algebra $D_4\simeq so(8)$ we find that
\begin{enumerate}
\item The continuous spectrum of $L$ fills up the set of rays $l_\nu$, $\nu=0,\dots 11$ in the complex $\lambda$-plane for which
\begin{equation}
\label{eq:imla}
\begin{split}
\im \lambda \alpha(J) =0,
\end{split}
\end{equation}
where $\alpha$ is any root of $D_4$ and $J$ is given by (\ref{J}), see Figure \ref{fig:1}.
These rays are defined by
\begin{equation}
\label{eq:lnu}
\begin{split}
l_\nu = \arg \lambda = \frac{\pi \nu}{6}, \qquad \nu = 0,\dots 11.
\end{split}
\end{equation}
With each ray one can relate a subalgebra $\mathfrak{g}_\nu$ with root systems $\delta_\nu$ whose roots satisfy
\begin{equation}
\label{eq:gnu}
\begin{split}
\delta_\nu \equiv \{ \alpha \in \delta_\nu \quad \mbox{iff} \quad \im \lambda \alpha(J) =0 \quad \forall \lambda \in l_\nu\}.
\end{split}
\end{equation}
More specifically we have
\begin{equation}
\label{eq:g_nu}
\begin{aligned}
\delta_0 &\equiv \{ \pm (e_2+e_3), \pm (e_1- e_4), \pm (e_1+ e_4)\}, &\quad \delta_1 &\equiv \{ \pm (e_1+e_3)\}, \\
\delta_2 &\equiv \{ \pm (e_1-e_2), \pm (e_3 - e_4), \pm (e_3 + e_4)\}, &\quad \delta_3 &\equiv \{ \pm (e_2-e_3)\}, \\
\delta_4 &\equiv \{ \pm (e_1-e_3), \pm (e_2- e_4) , \pm (e_2 + e_4)\}, &\quad \delta_5 &\equiv \{ \pm (e_1+e_2)\}
\end{aligned}
\end{equation}
and $\delta_{\nu+6}\equiv \delta_\nu$, $\nu=0,\dots, 5$.

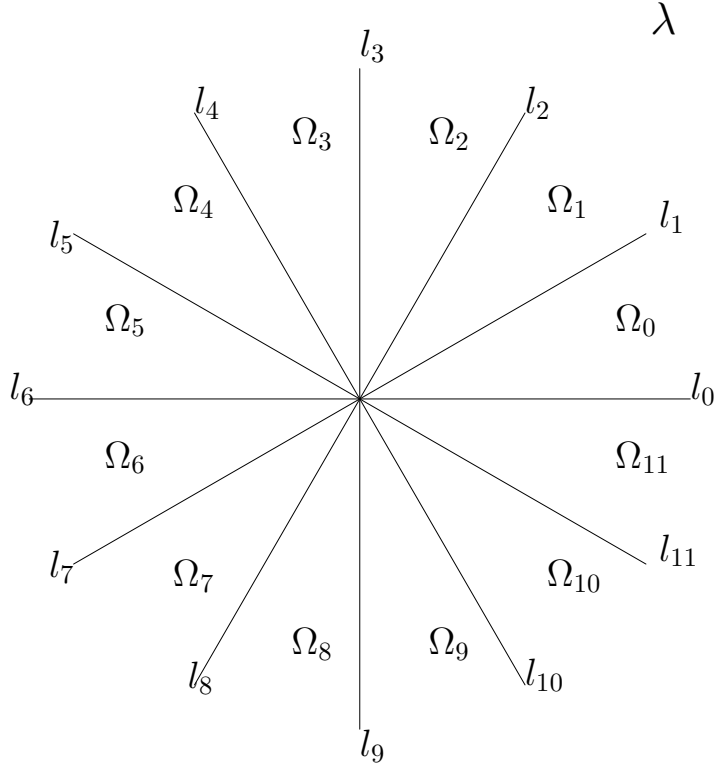
\begin{figure}[t]
\centering
\setlength{\unitlength}{0.2500pt}
\ifx\plotpoint\undefined\newsavebox{\plotpoint}\fi
\sbox{\plotpoint}{\rule[-0.175pt]{0.350pt}{0.350pt}}%
\special{em:linewidth 0.5pt}%
\begin{picture}(1440,1440)(0,0)

\put(1160.,1273){\Large $\lambda$}
\put(720,1240){\large $l_3$}
\put(1173,980.){\large $l_1$}
\put(1173,475.){\large $l_{11}$}
 \put(720,180.){\large $l_9$}
\put(250,450.){\large $l_7$}
\put(250,950.){\large $l_5$}

\put(970,1153){\large $l_2$}
\put(1220,720){\large $l_0$}
\put(970,287){\large $l_{10}$}
\put(460,282){\large $l_8$}
\put(190,720){\large $l_6$}
\put(470,1153){\large $l_4$}

\path(720,720)(470,287)
\path(720,720)(970,287)
\path(720,720)(1220,720)
\path(720,720)(970,1153)
\path(720,720)(470,1153)
\path(720,720)(220,720)

\path(720,720)(720,220)
\path(720,720)(1153,470)
\path(720,720)(1153,970)
\path(720,720)(720,1220)
\path(720,720)(287,970)
\path(720,720)(287,470)

\put( 1106.4, 823.53){\large $\Omega_0$}
\put(1002.8, 1002.8){\large $\Omega_1$}
\put( 823.53, 1106.4){\large $\Omega_2$}
\put( 616.47, 1106.4){\large $\Omega_3$}
\put(  437.16, 1002.8){\large $\Omega_4$}
\put(  333.63, 823.53){\large $\Omega_5$}
\put(  333.63, 616.47){\large $\Omega_6$}
\put(  437.16, 437.16){\large $\Omega_7$}
\put(  616.47, 333.63){\large $\Omega_8$}
\put(  823.53, 333.63){\large $\Omega_9$}
\put(  1002.8, 437.16){\large $\Omega_{10}$}
\put(  1106.4, 616.47){\large $\Omega_{11}$}
\end{picture}
\caption{The continuous spectrum of $L$ with $\mathbb{Z}_6$-symmetry fills up the rays $l_\nu$, $\nu=0,\dots, 11$; $\Omega_\nu$ are the analyticity regions of the FAS $\xi_\nu (x,t,\lambda)$.
\label{fig:1}}
\end{figure}
\item The regions of analyticity of the FAS $\xi_\nu (x,t,\lambda)$ are the sectors
\begin{equation}\label{eq:sect0}\begin{split}
\Omega_\nu \equiv \left\{ \frac{\pi \nu }{6} \leq \arg \lambda \leq \frac{\pi (\nu+1)}{6} \right\}.
\end{split}\end{equation}
They can be introduced as the solutions of the set of integral equations
\begin{equation}
\label{eq:fas1}
\begin{split}
\xi_{\nu,ij} (x,t,\lambda) = \delta_{ij} +  i \int_{s_{ij} \infty}^{x} dy \; e^{-i\lambda (J_i - J_j)(x-y)} (Q(y,t) \xi{\nu}(y,t,\lambda))_{ij},
\end{split}
\end{equation}
where $\lambda\in \Omega_\nu$ and  $s_{ij}$ take the values $\pm 1$, which are specific for each of the sectors $\Omega_\nu$, see Table \ref{tab:1}.

\begin{table}
  \centering
\begin{tabular}{|l|c|c|c|c|c|c|c|c|c|}
  \hline
  & $(1,2) $ & $(1,3)$ & $ (1,4)$ & $(2,3)$ & $ (2,4)$ & $(3,4) $
  & $ (1,7)$ & $(1,6)$  & $ (2,6) $ \\
&  &  & $  (1,5)$ &  & $  (2,5)$ & $ (3,5) $  &  &   &  \\
  \hline
$  \Omega_0 $ & $ - $ & $ + $ & $ + $ & $ + $ & $ + $ & $ - $ & $ + $ & $ - $ & $ + $ \\
  $  \Omega_1 $ & $ - $ & $ + $ & $ + $ & $ + $ & $ + $ & $ - $ & $ + $ & $ + $ & $ +$ \\
   $ \Omega_2 $ & $ + $ & $ + $ & $ + $ & $ + $ & $ + $ & $ + $ & $ + $ & $ + $ & $ +$ \\
   $ \Omega_3 $ & $ + $ & $ + $ & $ + $ & $ - $ & $ + $ & $ + $ & $ + $ & $ + $ & $ + $ \\
   $ \Omega_4 $ & $ + $ & $ - $ & $ + $ & $ - $ & $ - $ & $ + $ & $ + $ & $ + $ & $ + $ \\
   $ \Omega_5 $ & $ + $ & $ - $ & $ + $ & $ - $ & $ - $ & $ + $ & $ - $ & $ + $ & $ + $ \\
     \hline
\end{tabular}
  \caption{The signs $s_{jk} = s_{9-k,9-j}$ in (\ref{eq:fas1}). We have listed only the values of
  $s_{jk}$ for $j<k$ because $s_{kj} =-s_{jk}$. Also the signs for $\Omega_{\nu+6}$ are opposite to the signs
  for $\Omega_\nu$, $\nu=0,\dots, 5$.  }\label{tab:1}
\end{table}

\item In each of the sectors $\Omega_\nu$ we introduce an ordering of the roots calling the root $\alpha$ $\nu$-positive (resp. $\nu$-negative) if $\im \lambda \alpha(J) >0$ (resp. $\im \lambda \alpha(J) <0$) for $\lambda\in \Omega_\nu$.
Thus the sets of positive roots of the subalgebras $\mathfrak{g}_\nu$ are
\begin{equation}
\label{eq:g_nup}
\begin{aligned}
\delta_0^+ &\equiv \{ e_2+e_3, e_1- e_4,   e_1+ e_4 \}, &\quad \delta_1^+ &\equiv \{  e_1+e_3\}, \\
\delta_2^+ &\equiv \{   e_1-e_2,  e_3 - e_4,   e_3 + e_4\}, &\quad \delta_3^+ &\equiv \{  e_2-e_3 \}, \\
\delta_4^+ &\equiv \{  e_1-e_3,   e_2- e_4 ,  e_2 + e_4\}, &\quad \delta_5^+ &\equiv \{   e_1+e_2\}.
\end{aligned}
\end{equation}
Note that the root systems $\delta_{1,3,5}$ are isomorphic to the root system of $sl(2)$, while each of the root systems $\delta_{0,2,4}$ is isomorphic to the direct sum of three $sl(2)$ algebras.
Indeed, it is enough to check that all the roots in $\delta_{0,2,4}^+$ are mutually orthogonal.

\item As set of scattering data we introduce the limits of the FAS along both sides of all rays $l_\nu e^{\pm i0}$:
\begin{equation}
\label{eq:chi-as}
\begin{split}
\lim_{x\to -\infty} e^{-i\lambda Jx} \xi_\nu (x,t,\lambda ) e^{i\lambda Jx} &= S_\nu^+ (t,\lambda) , \\
\lim_{x\to \infty} e^{-i\lambda Jx} \xi_\nu (x,t,\lambda ) e^{i\lambda Jx}& = T_\nu^-(t,\lambda) D_\nu^+(\lambda) ,
\end{split}
\quad \forall \lambda \in l_\nu e^{+i0},
\end{equation}
\begin{equation}
\label{eq:lim}
\begin{split}
\lim_{x\to -\infty} e^{-i\lambda Jx} \xi_\nu (x,t,\lambda) e^{i\lambda Jx} &= S_{\nu+1}^- (t,\lambda) , \\
\lim_{x\to -\infty} e^{-i\lambda Jx} \xi_\nu (x,t,\lambda ) e^{i\lambda Jx} &= T_{\nu+1}^+(t,\lambda) D^-_{\nu+1}(\lambda),
\end{split}
\quad \forall \lambda \in l_{\nu+1} e^{-i0},
\end{equation}
where $S_\nu^\pm$,  $T_\nu^\pm$ and  $D_\nu^\pm$ are elements of the subgroup $\mathcal{G}_\nu$ of the form
\begin{equation}
\label{eq:stdpm}
\begin{aligned}
S_\nu^\pm (t,\lambda)&= \exp \left( \sum_{\alpha\in \delta_\nu^+}^{} s^\pm_{\alpha,\nu}(t,\lambda) E_{\pm \alpha} \right), \\
T_\nu^\pm(t,\lambda) &= \exp \left( \sum_{\alpha\in \delta_\nu^+}^{} \tau^\pm_{\alpha,\nu}(t,\lambda) E_{\pm \alpha} \right), \\
 D_\nu^\pm (\lambda)&= \exp \left( \sum_{\alpha\in \delta_\nu^+}^{} d^\pm_{\alpha,\nu}(\lambda) H_{\alpha}\right)
\end{aligned}
\end{equation}
satisfying
\begin{equation}
\label{eq:tds}
T_\nu^- D_\nu^+ \hat{S_\nu^+} = T_\nu^+ D_\nu^- \hat{S_\nu^-} = T_\nu(\lambda,t), \qquad \lambda \in l_\nu.
\end{equation}
For potentials on compact support $T_\nu(\lambda,t)$ is the scattering matrix, which for $\lambda \in l_\nu$ is also taking values in the subgroup $\mathcal{G}_\nu$. Then (\ref{eq:tds}) is the Gauss decomposition of  $T_\nu(\lambda,t)$.
We only need to add, that the functions $D_\nu^+$ and $D_{\nu+1}^-$ are analytic in the sector $\Omega_\nu$.

\item
The FAS $\xi_\nu (x,t,\lambda)$ satisfy a RHP:
\begin{equation}
\label{eq:rhp1}
\begin{split}
\xi_{\nu+1} (x,t,\lambda) = \xi_\nu (x,\lambda)G_\nu (x,\lambda),
\qquad G_\nu (x,\lambda) = e^{-i\lambda Jx}\hat{S}^-_{\nu+1} S^+_{\nu+1} (\lambda) e^{i\lambda Jx},
\end{split}
\end{equation}
which allows canonical normalization:
\begin{equation}
\label{eq:rhp2}
 \lim_{\lambda\to\infty} \xi_\nu (x,\lambda) =\openone.
\end{equation}
From Zakharov-Shabat theorem \cite{ZaSha}
the solution of the RHP (\ref{eq:rhp1}) with canonical normalization is a FAS of the system
\begin{equation}
\label{eq:xinu}
 i \frac{\partial \xi_\nu}{ \partial x } + Q(x)\xi_\nu (x,\lambda) -\lambda [J, \xi_\nu (x,\lambda)]=0,
\end{equation}
which means that $\chi_\nu (x,\lambda) =\xi_\nu (x,\lambda) e^{-i\lambda Jx}$ are FAS of the Lax operator $L$.

\end{enumerate}

A basic step in applying the ISM is the analysis of the mapping between the space of admissible potentials
$\mathcal{M}$ and the space of scattering data. In view of Zakharov-Shabat theorem it is natural to expect that
the minimal sets of scattering data are determined by the set of $S_\nu^\pm (t,\lambda)$.
In fact we can prove the following lemma
\begin{lemma}
\label{lem:1}
Assume that the solutions  $\xi_\nu(x,t,\lambda)$ of the RHP (\ref{eq:rhp1}) are regular, i.e. they have no zeros or
singularities in their regions of analyticity.
Then there are two minimal sets of scattering data determining uniquely both the scattering matrix $T(t,\lambda)$ and the
potential $Q(x,t)$
\begin{equation}
\label{eq:T12}
\begin{split}
\mathcal{T}_1 \equiv \{  s_{0,\alpha}^+ (t,\lambda) , \quad \alpha\in\delta_0^+,   \quad  s_{1,\alpha}^\pm (t,\lambda) , \quad \alpha\in\delta_1^+, \quad s_{2,\alpha}^- (t,\lambda) , \quad \alpha\in\delta_2^-, \}, \\
\mathcal{T}_2 \equiv \{  s_{0,\alpha}^+ (t,\lambda) , \quad \alpha\in\delta_0^+,   \quad  s_{1,\alpha}^\pm (t,\lambda) , \quad \alpha\in\delta_1^+, \quad s_{2,\alpha}^- (t,\lambda) , \quad \alpha\in\delta_2^-, \}.
\end{split}
\end{equation}
\end{lemma}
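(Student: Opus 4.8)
The argument is given for $\mathcal{T}_1$; it applies verbatim to $\mathcal{T}_2$, the two sets being exchanged by the symmetry of the ray configuration of Figure \ref{fig:1}. The strategy is: (i) use the $\mathbb{Z}_6$ reduction to reconstruct the complete family of triangular boundary factors $\{S_\nu^\pm\}_{\nu=0}^{11}$ from the finitely many coefficients listed in $\mathcal{T}_1$; (ii) observe that these factors already fix every sewing function $G_\nu$ of the RHP (\ref{eq:rhp1}); (iii) invoke the Zakharov-Shabat theorem \cite{ZaSha} to solve the regular RHP uniquely; and (iv) read off both $Q$ and the scattering matrix $T$ from the asymptotics of the resulting FAS.

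For step (i) the Coxeter automorphism $C$ acts on the spectral variable by $\lambda\mapsto\omega\lambda$, $\omega=e^{\pi i/3}$, a rotation by $60^\circ$ that sends $l_\nu\mapsto l_{\nu+2}$ and the subalgebra $\mathfrak g_\nu$ to $\mathfrak g_{\nu+2}$, hence $S_\nu^\pm(\lambda)\mapsto S_{\nu+2}^\pm(\omega\lambda)$. The even rays $l_0,l_2,l_4$ and the odd rays $l_1,l_3,l_5$ form two orbits, and together with the antipodal identification $\delta_{\nu+6}=\delta_\nu$ (which swaps $\nu$-positive and $\nu$-negative roots) this lets the listed data regenerate all $S_\nu^\pm$: the coefficients $s_{0,\alpha}^+$ propagate to the $+$-factors on every even ray, the $s_{2,\alpha}^-$ to the $-$-factors there, and the $s_{1,\alpha}^\pm$ to both factors on every odd ray. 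The bookkeeping is light because, as remarked after (\ref{eq:g_nup}), the three roots in each of $\delta_{0,2,4}^+$ are mutually orthogonal, so the groups $\mathcal G_{0,2,4}$ are abelian and the exponentials in (\ref{eq:stdpm}) depend linearly on the $s_{\alpha,\nu}^\pm$.

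For steps (ii)--(iv): since $G_\nu=e^{-i\lambda Jx}\hat S_{\nu+1}^- S_{\nu+1}^+ e^{i\lambda Jx}$ involves only the $S^\pm$ just reconstructed, the RHP (\ref{eq:rhp1}) with canonical normalization (\ref{eq:rhp2}) is now fully specified. Under the regularity hypothesis the FAS have no zeros or poles, so by the Zakharov-Shabat theorem the RHP has a unique solution $\xi_\nu$. Expanding $\xi_\nu=\openone+\lambda^{-1}\xi_\nu^{(1)}+O(\lambda^{-2})$ and substituting into (\ref{eq:xinu}), the $\lambda^0$ term yields $Q=[J,\xi_\nu^{(1)}]=\lim_{\lambda\to\infty}\lambda\,[J,\xi_\nu-\openone]$; since $\ad_J$ is invertible on its image the four functions $q_i$ are determined without ambiguity. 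The scattering matrix is recovered simultaneously: the limits (\ref{eq:chi-as})--(\ref{eq:lim}) of $\xi_\nu$ as $x\to\pm\infty$ give the products $T_\nu^\mp D_\nu^\pm$ together with the already known $S_\nu^\pm$, and the Gauss factorization (\ref{eq:tds}) then assembles $T_\nu(\lambda,t)$ on every ray, i.e. the full $T(t,\lambda)$.

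The main obstacle is the fundamental-domain bookkeeping in step (i): one must check that the prescribed sign pattern $(+,\pm,-)$ on the rays $l_0,l_1,l_2$, propagated by $C$ and the antipodal map, regenerates each $S_\nu^\pm$ exactly once, so that $\mathcal T_1$ is neither redundant nor deficient. This is precisely where the split between the rank-three systems $\delta_{0,2,4}$ and the rank-one systems $\delta_{1,3,5}$ must be matched against the counting of positive roots of $D_4$. A secondary point is to confirm that regularity is exactly the hypothesis needed for the diagonal factors $D_\nu^\pm$, analytic in $\Omega_\nu$, to be recovered unambiguously from the triangular data, i.e. that the absence of zeros rules out any pole (bound-state) contribution that the $S_\nu^\pm$ alone could not detect.
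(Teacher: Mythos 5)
Your proposal is correct in substance, but it takes a genuinely different route through the middle of the argument, so it is worth comparing the two. Your step (i) coincides with the paper's: the $\mathbb{Z}_6$ reduction (\ref{eq:Snu}) propagates $S_0^+$, $S_1^\pm$ and $S_2^-$ to all twelve rays (in particular $S_2^-$ yields $S_0^-$ and $S_0^+$ yields $S_2^+$, so both triangular factors become known everywhere); your bookkeeping here actually spells out what the paper states tersely. After that the paths diverge. The paper never takes $x\to\pm\infty$ limits of the RHP solution: it recovers the diagonal factors $D_\nu^\pm$ directly from the $S$-data by sandwiching the Gauss decomposition (\ref{eq:tds}) between highest and lowest weight vectors of the fundamental representations of $\mathfrak{g}_\nu$, which converts (\ref{eq:tds}) into the scalar jump conditions (\ref{eq:dnupm}) for the functions $d_{\nu,\alpha_j}^\pm$; regularity ensures these have no zeros or singularities, so they are recovered from their analyticity and jumps, and then $T_\nu^\pm$ emerge as the unique Gauss factors of (\ref{eq:tds2}). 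Thus the full scattering matrix $T_\nu = T_\nu^- D_\nu^+ \hat{S}_\nu^+$ is reconstructed from $\mathcal{T}_1$ by purely algebraic and scalar-analytic means, and the matrix RHP is used only at the very end to recover $Q$ via (\ref{eq:Qxt}). You instead funnel everything through the matrix RHP: the $S$-data fix the sewing functions, the regular canonically normalized RHP has a unique solution, and both $Q$ (your $\lambda\to\infty$ expansion agrees with (\ref{eq:Qxt})) and $T$ (from the limits (\ref{eq:chi-as})--(\ref{eq:lim})) are functionals of that one solution. This is a legitimate and shorter uniqueness argument, but it silently leans on the fact that the unique RHP solution possesses $x\to\pm\infty$ limits of the Gauss form (\ref{eq:chi-as})--(\ref{eq:lim}); this is true because the data are assumed to come from an admissible decaying potential, whose FAS both solve the RHP and have those limits, and by uniqueness they coincide with your RHP solution --- without that remark the step reads as circular, since those limit formulas were derived for FAS built from a potential, not for abstract RHP solutions. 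Also, uniqueness of the regular normalized RHP solution is a Liouville-type fact rather than the Zakharov--Shabat theorem itself, which in the paper supplies the differential equation (\ref{eq:xinu}). What the paper's longer route buys is an explicit reconstruction of $D_\nu^\pm$ (whose expansion coefficients are the generating functionals of the integrals of motion) without solving any matrix factorization problem; what yours buys is brevity and a cleaner uniqueness statement. Your remark that the groups $\mathcal{G}_{0,2,4}$ are abelian is true but not needed.
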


\begin{proof}
It is well known, see \cite{ContM,VG-Ya-13,VG-Ya-14}
that regularity of the solutions of RHP ensures that relevant Lax operator
$L$ has no discrete eigenvalues.
More precisely, this means that the functions $D_\nu^\pm(\lambda)$ have no zeros or singularities.

First we will prove that each of the sets $\mathcal{T}_i$, $i=1,2$ determines uniquely the full set of scattering data.
Assume we are given the set $\mathcal{T}_1$.
Looking at the first equation from (\ref{eq:stdpm}),  $\mathcal{T}_1$ uniquely determines the matrices $S_0^\pm(\lambda)$ for $\lambda\in l_0 e^{\pm i0}$ and $S_1^\pm(\lambda)$ for $\lambda\in l_1 e^{\pm i0}$.
 Using the $\mathbb{Z}_6$ reduction condition we can determine the matrices  $S_\nu^\pm(\lambda)$ for $\lambda\in l_\nu e^{\pm i0}$ along all other rays by the relations
\begin{equation}
\label{eq:Snu}
\begin{aligned}
S_{2\nu}^\pm(\lambda) &= c^\nu S_0^\pm(\omega^{-\nu} \lambda) c^{-\nu}, \qquad \lambda \in l_\nu e^{\pm i0}, \\
S_{2\nu+1}^\pm(\lambda) &= c^\nu S_1^\pm(\omega^{-\nu} \lambda) c^{-\nu}, \qquad \lambda \in l_{\nu+1} e^{\pm i0}.
\end{aligned}
\end{equation}
The next step is to demonstrate that by knowing all $S_\nu^\pm(\lambda)$ we can recover $D_\nu^\pm (\lambda)$.
To this end we make use of the analyticity properties of $D_\nu^\pm(\lambda)$ and  (\ref{eq:tds}).
Indeed, let us sandwich (\ref{eq:tds}) by $\langle \omega_{\nu,j}^\pm| \cdots
|\omega_{\nu,j}^\pm \rangle$, where $\omega_{\nu,j}^+$ is the $j$-th fundamental weight of the subalgebra $\mathfrak{g}_\nu$ evaluated with the ordering in $\Omega_\nu$.
 Obviously $\omega_{\nu,j}^+$ are the highest weight vectors in the corresponding fundamental representation of $\mathfrak{g}_\nu$.
Analogously $\omega_{\nu,j}^-$ are the lowest weight vectors in the same  fundamental representation of $\mathfrak{g}_\nu$.
We also need the following properties of the weight vectors $\langle  \omega_{\nu,j}^\pm |$, $| \omega_{\nu,j}^\pm \rangle $ and the factors $S_{\nu}^\pm(\lambda)$, $T_{\nu}^\pm(\lambda)$, $D_{\nu}^\pm(\lambda)$:
\begin{equation}
\label{eq:Snupm}
\begin{aligned}
\langle  \omega_{\nu,j}^+ | S_{\nu}^-(\lambda) &= \langle  \omega_{\nu,j}^+ | T_{\nu}^-(\lambda) =\langle  \omega_{\nu,j}^+ | , &\quad
\langle  \omega_{\nu,j}^-| S_{\nu}^+(\lambda) &= \langle  \omega_{\nu,j}^- | T_{\nu}^+(\lambda) =\langle  \omega_{\nu,j}^- | , \\
 S_{\nu}^+(\lambda) |\omega_{\nu,j}^+  \rangle   &=  T_{\nu}^+(\lambda) |\omega_{\nu,j}^+  \rangle  =|\omega_{\nu,j}^+  \rangle  , &\quad
 S_{\nu}^-(\lambda) |\omega_{\nu,j}^-  \rangle   &=  T_{\nu}^-(\lambda) |\omega_{\nu,j}^-  \rangle  =|\omega_{\nu,j}^-  \rangle
\end{aligned}
\end{equation}
and
\begin{equation}
\label{eq:Dnu}
\begin{split}
\langle  \omega_{\nu,j}^+ | D_{\nu}^+(\lambda) |\omega_{\nu,j}^+  \rangle   &= e^{d_{\nu,\alpha_j}^+}, \qquad
\langle  \omega_{\nu,j}^- | D_{\nu}^-(\lambda) |\omega_{\nu,j}^-  \rangle   = e^{-d_{\nu,\alpha_j}^-}.
\end{split}
\end{equation}
Thus from (\ref{eq:tds}) we obtain that
\begin{equation}
\label{eq:dnupm}
\begin{aligned}
d_{\nu,\alpha_j}^+ - d_{\nu,\alpha_j}^- &= \ln \langle  \omega_{\nu,j}^- | \hat{S}_{\nu}^-(\lambda) S_{\nu}^+(\lambda) |  \omega_{\nu,j}^- \rangle \\
&= - \ln \langle  \omega_{\nu,j}^+ | \hat{T}_{\nu}^+(\lambda) T_{\nu}^-(\lambda) |  \omega_{\nu,j}^+ \rangle .
\end{aligned}
\end{equation}
The functions $d_{\nu,\alpha_j}^\pm$ can be recovered uniquely from their analyticity properties and from the jumps (\ref{eq:dnupm}) along the rays $l_\nu$, which in turn allows us to recover the factors $D_\nu^\pm (\lambda)$.

As a result, we can rewrite (\ref{eq:tds}) in the form
\begin{equation}
\label{eq:tds2}
\begin{split}
D_\nu^- \hat{S_\nu^-}S_\nu^+  \hat{D}_\nu^+= \hat{ T}_\nu^+ (\lambda,t) T_\nu^- ,  \qquad   \lambda \in l_\nu.
\end{split}
\end{equation}
Now the factors $T_\nu^\pm (\lambda)$ are recovered as the unique Gauss factors of the right hand side of (\ref{eq:tds2}).

The proof that $\mathcal{T}_2$ determines uniquely all the scattering data is analogous.

Finally,  the corresponding potential is reconstructed from \cite{NMPZ,ContM,GeYa*94}
\begin{equation}
\label{eq:Qxt}
\begin{split}
Q(x,t) = \lim_{\lambda\to\infty} \lambda \left( J - \xi_\nu(x,t,\lambda) J \hat{\xi}_\nu(x,t,\lambda) \right),
\end{split}
\end{equation}
where $\xi_\nu(x,t,\lambda) $ is the unique regular solution of the RHP (\ref{eq:rhp1}).

\end{proof}

\subsection{Time dependence of the scattering data}
\label{sdata}
The Lax representation ensures that $L$ and $M$ allow the same set of fundamental solutions.
As such we shall choose the FAS. A bit more generally, we can assume that
\begin{equation}
\label{eq:LM2}
L\chi_\nu(x,t,\lambda)=0, \qquad  M\chi_\nu(x,t,\lambda)-\chi_\nu(x,t,\lambda) \Gamma(\lambda) =0,
\end{equation}
where $\phi_+(x,t,\lambda) = \xi_+(x,t,\lambda)e^{i\lambda Jx}$ and $\Gamma(\lambda)$ is a constant matrix.
We shall fix up $\Gamma(\lambda)$ in such a way, that the definition of the Jost solutions are $t$-independent.

Let $\lambda\in l_\nu e^{i0}$ and let us  calculate the limit
\begin{equation}
\label{eq:limM}
\begin{split}
&\lim_{x \to -\infty} \left( M\chi_{\nu}(x,t,\lambda)\right) e^{i\lambda J x}  \\
& \qquad = \lim_{x\to -\infty}\left[\left(i\frac{\partial}{\partial t}+ \sum_{p=0}^{m-1}\lambda^p V_{p} -\lambda^{m} K \right) \chi_\nu(x,t,\lambda)
-  \chi_\nu(x,t,\lambda)  \Gamma(\lambda) \right] e^{i\lambda J x} \\
&\qquad = i \frac{\partial S_\nu^+}{ \partial t } - \lambda^m K S_\nu^+ (t,\lambda) -
\lim_{x\to -\infty} S_\nu^+ (t,\lambda)e^{-i\lambda J x}  \Gamma(\lambda) e^{i\lambda J x}, \qquad \lambda \in l_\nu e^{+i0},
\end{split}
\end{equation}
where we took into account that all $V_k(x,t)$ vanish for $x\to\pm\infty$. Thus we obtain that
\begin{equation}
\label{eq:Gam0}
i \frac{\partial S_\nu^+}{ \partial t } - \lambda^m K S_\nu^+ (t,\lambda) -
\lim_{x\to -\infty} e^{-i\lambda J x} S_\nu^+ (t,\lambda)\Gamma(\lambda) e^{i\lambda J x} =0.
\end{equation}
First we consider the diagonal part of the above equation taking into account that from (\ref{eq:stdpm}) it follows
that all diagonal elements of $S_\nu^+(t,\lambda)$ are equal to 1. This immediately gives
\begin{equation}
\label{eq:Gam}
\begin{split}
\lambda^m K + \lim_{x\to -\infty} e^{-i\lambda J x} \Gamma(\lambda) e^{i\lambda J x} =\lambda^m K + \Gamma(\lambda) =0,
\end{split}
\end{equation}
i.e. $\Gamma(\lambda) =-\lambda^m K$. Then (\ref{eq:Gam0}) becomes
\begin{equation}
\label{eq:Snut}
 i \frac{\partial S_\nu^+}{ \partial t } - \lambda^m [K, S_\nu^+ (t,\lambda) ]=0, \qquad \lambda \in l_\nu e^{+i0}.
\end{equation}

Next we calculate similar limits for $x\to \infty$:
\begin{equation}
\label{eq:Mphim}
\begin{split}
&\lim_{x \to \infty} \left( M\chi_{\nu}(x,t,\lambda)\right) e^{i\lambda J x}  \\
& \qquad = \lim_{x\to \infty}\left[\left(i\frac{\partial}{\partial t}+ \sum_{p=0}^{m-1}\lambda^p V_{p} -\lambda^{m} K \right) \chi_\nu(x,t,\lambda)
-  \chi_\nu(x,t,\lambda) \Gamma(\lambda) \right] e^{i\lambda J x} \\
&\qquad =i \frac{\partial T_\nu^- D_\nu^+}{ \partial t } - \lambda^m [K, T_\nu^-  (t,\lambda) D_\nu^+ (\lambda) ]=0.
\end{split}
\end{equation}
Again we consider first the diagonal part of (\ref{eq:Mphim}) and remember that all diagonal elements of the lower
triangular matrix $T_\nu^-(t,\lambda)$ are equal to 1. We obtain that
\begin{equation}
\label{eq:dDt}
\begin{split}
i \frac{\partial D_\nu^+}{ \partial t } =0,
\end{split}
\end{equation}
i.e. the matrix elements of $D_\nu^+(\lambda)$ are generating functionals of the integrals of motion of the NLEE.
Finally, the lower triangular part of (\ref{eq:Mphim}) gives
\begin{equation}
\label{eq:dTm}
i \frac{\partial T_\nu^-}{ \partial t } - \lambda^m [K, T_\nu^- (t,\lambda) ]=0, \qquad \lambda \in l_\nu e^{+i0}.
\end{equation}

Analogously we can set $\lambda\in l_\nu e^{-i0}$ and evaluate the limits for $x\to\pm\infty$.
Therefore, if $Q(x,t)$ satisfies the corresponding equations then the time evolution of the scattering data will be given by
\begin{equation}
\label{eq:dscatdt}
\begin{aligned}
i \frac{\partial S_\nu^+}{ \partial t } &- \lambda^m [K, S_\nu^+ (t,\lambda) ]=0,
&i \frac{\partial T_\nu^-}{ \partial t } &- \lambda^m [K, T_\nu^- (t,\lambda) ]=0,  \quad \lambda \in l_\nu e^{+i0}, \\
i \frac{\partial S_\nu^-}{ \partial t } &- \lambda^m [K, S_\nu^- (t,\lambda) ]=0,
&i \frac{\partial T_\nu^+}{ \partial t } &- \lambda^m [K, T_\nu^ +(t,\lambda) ]=0,  \quad \lambda \in l_\nu e^{-i0}
\end{aligned}
\end{equation}
and the diagonal factors $D_\nu^\pm (\lambda)$ are $t$-independent.
In other words $d_{\alpha,\nu}^\pm (\lambda)$ are also time-independent and are more convenient as generating functionals
of conservation laws.
The reason for this is, that if we consider the expansion of, say $d_{\alpha,\nu}^+ (\lambda)$ over the inverse powers of $\lambda$, that is
\begin{equation}\label{eq:dnu}\begin{split}
d_{\alpha,\nu}^+ (\lambda) = \sum_{ p=1}^{\infty} C_{\alpha,\nu }^{(p)} \lambda^{-p} ,
\end{split}\end{equation}
then their densities will be local, i.e. they will be expressed  in terms of the coefficients of $Q(x,t)$ and their $x$-derivatives, see \cite{Mikhailov,VG-Ya-13}.

The solution of the equations (\ref{eq:dscatdt}) is given by
\begin{equation}
S_{\nu,kj}^\pm (t,\lambda)= e^{i\lambda^{m} (K_k - K_j)t} S_{\nu,ij}^\pm(\lambda,0), \qquad
T_{\nu,kj}^\pm (t,\lambda)= e^{i\lambda^{m} (K_k - K_j)t} T_{\nu,ij}^\pm(\lambda,0).
\end{equation}

Thus $ S_{\nu}^\pm(\lambda,0)$ (resp. $ T_{\nu}^\pm(\lambda,0)$) for $\nu =0,1$ can be viewed  as the Cauchy data for the initial conditions.
Therefore solving the corresponding equations reduces to solving the direct and the inverse scattering problem for the Lax operator $L$, see \cite{Mikhailov,VG-Ya-13}.

\section{Discussion and conclusions}
\label{end}
In this paper we investigated some aspects of the deep relation between integrable equations and infinite-dimensional Lie algebras.
More concretely, we studied  the hierarchy corresponding in the scheme of Drinfeld and Sokolov to the untwested affine Kac-Moody Lie algebra $D_4^{(1)}$.

The NLEE related to $D_4^{(1)}$ deserve special attention due to the unique properties of $D_4^{(1)}$.
This is the only simple Lie algebra whose
Dynkin diagram has an $S_3$ symmetry and which has  3 as a double-valued exponent. We formally constructed the hierarchy of soliton equations
using the recursion operators. Thus the general theory of the recursion operators with deep reductions \cite{VG-Ya-13,VG-Ya-14} has been
adapted to the specific case of $D_4$.

Special attention is paid to the first nontrivial member of the hierarchy which is an one-parameter family of mKdV equations. Note that this
family of mKdV equations is unique! Among the simple Lie algebras only $D_{2r}^{(1)}$ allow double-valued exponents equal to $2r-1$. Thus with
each such algebra one can relate an one-parameter family of mKdV-like equations. These will be systems of $2r$ evolution equations for $2r$
independent functions whose leading $x$-derivative terms will be of order $2r-1$. In particular, with $D_6^{(1)}$ one could relate a system
of 6 evolution equations for 6 independent functions whose leading $x$-derivative will be of order 5.
All these equations have an infinite number of integrals of motion and a natural Hamiltonian formulation.
The properties of their Hamiltonian hierarchies will analyzed elsewhere.

We also outlined  the spectral properties of the Lax operator $L$ which has an in-built $\mathbb{Z}_6$-reduction
of Mikhailov type \cite{Mikhailov}. In particular, we constructed the FAS of the system and the associated
RHP. We also formulated two minimal sets of scattering data of $L$ from which one can recover uniquely both the
scattering matrix $T(\lambda,t)$ and the potential $Q(x,t)$ of $L$.

A natural extension of these results is to study the soliton equations associated with the twisted affine Kac-Moody algebras.
The corresponding integrable hierarchies related to the twisted affine Kac-Moody algebras
$A^{(2)}_{2r}$, $A^{(2)}_{2r-1}$ and $D^{(2)}_{4}$ are under investigations and have been presented elsewere
\cite{GMSV5,GMSV6,GMSV7}.

Another continuation in this direction involves constructing the soliton solutions of the obtained integrable hierarchies
using, for example, the dressing method \cite{ZakharovShabat2,Mikhailov} and analyzing their properties.


\section*{Acknowledgements}
We are grateful to Laslo Feh\'{e}r for the very useful discussions.
The work is partially supported by the ICTP - SEENET-MTP project PRJ-09.
One of us (VSG) is grateful to professors A.S. Sorin, A. V. Mikhailov and V. V. Sokolov for useful discussions during his visit to
JINR, Dubna, Russia under project 01-3-1116-2014/2018.

\section*{Appendix}
\appendix
\section{The simple Lie algebra $D_4$}

Let $\mathfrak{g}$ be a simple Lie algebra.
A Cartan-Weyl basis in $\mathfrak{g}$ is a system of generators $H_i , E_{\alpha_i}, E_{- \alpha_i},1 \leq i \leq r$ such that
\begin{equation}
\begin{aligned}
&\big[ H_i, H_j \big]= 0, \\
&\big[  E_{\alpha_i} ,  E_{- \alpha_j} \big] = \delta_{ij} H_i, \\
&\big[  H_i ,  E_{\alpha_j} \big] = C_{ij} H_i, \\
&\big[  H_i ,  E_{- \alpha_j} \big] = -C_{ij} H_i.
\end{aligned}
\end{equation}
The matrix $(C_{ij})$ is non-degenerate and is called the Cartan matrix of $\mathfrak{g}$. The number $r$ is called the rank of the algebra. The generators $H_i$ form an Abelian subalgebra called the Cartan subalgebra.
Every simple Lie algebra is uniquely determined by its Cartan matrix and can also be represented by a Dynkin diagram \cite{Carter}.

The simple Lie algebra $D_4 \equiv \mathfrak{so}(8)$ is usually represented by a $8\times 8$ antisymmetric matrices.
In this realization the Cartan subalgebra is not diagonal, so we will use a realization for which every $X \in D_4$ satisfies
\begin{equation}
SX+(SX)^T=0,
\end{equation}
where the matrix $S$ is given by
\begin{equation}
S=
\begin{pmatrix}
0 & 0 & 0 & 0 & 0 & 0 & 0 & 1 \\
0 & 0 & 0 & 0 & 0 & 0 & -1 & 0 \\
0 & 0 & 0 & 0 & 0 & 1 & 0 & 0  \\
0 & 0 & 0 & 0 & -1 & 0 & 0 & 0 \\
0 & 0 & 0 & -1 & 0 & 0 & 0 & 0 \\
0 & 0 & 1 & 0 & 0 & 0 & 0 & 0 \\
0 & -1 & 0 & 0 & 0 & 0 & 0 & 0 \\
1 & 0 & 0 & 0 & 0 & 0 & 0 & 0
\end{pmatrix}.\label{S}
\end{equation}
This way the Cartan subalgebra is given by diagonal matrices.

The Cartan-Weyl basis of $D_4$ is given by
\begin{equation}
\begin{aligned}
H_i &= e_{ii} - e_{9-i,9-i},  \quad 1 \leq i \leq 4, \\
E_{\alpha_j} &= e_{j, j+1}+e_{8-j,9-j} \quad 1 \leq j \leq 3, \\
E_{\alpha_4} &= e_{3,5}+ e_{4,6}, \quad E_{-\alpha_j}=(E_{\alpha_j})^T,
\end{aligned}
\end{equation}
where by $e_{ij}$ we denote a matrix that has a one at the $i-th$ row and $j-th$ column and is zero everywhere else.

\section{Dihedral realization of the Coxeter automorphism}

Let $e_i$ be an orthonormal basis in the root space of $D_4$ and let the simple roots be of the form
\begin{equation}
\begin{aligned}
\alpha_1 = e_1 - e_2, \qquad \alpha_2 = e_2 - e_3, \qquad \alpha_3 = e_3 - e_4, \qquad \alpha_4 = e_3 + e_4.
\end{aligned}
\end{equation}
In the dihedral realization the Coxeter automorphism is an element of the Weyl group of the form
\begin{equation}
C=w_1 w_2
\end{equation}
with
\begin{equation}
w_1 = \prod_{\alpha \in A_1} S_{\alpha}, \qquad  w_2 = \prod_{\beta \in A_2} S_{\beta},
\end{equation}
where $A_1$ and $A_2$ are subsets of the simple roots $A= \{ \alpha_1, \alpha_2, \alpha_3, \alpha_4 \}$, such that
\begin{equation}
A_1 = \{ \alpha_1, \alpha_3, \alpha_4 \},  \: A_2 =  \{ \alpha_2 \}.
\end{equation}
Here $S_\alpha$ is the reflection with respect to the simple root $\alpha$.
For the Coxeter element we get
\begin{equation}
C= S_{\alpha_1} S_{\alpha_3} S_{\alpha_4} S_{\alpha_2}.
\end{equation}
The action of $C$ on the basis vectors is
\begin{equation}
\begin{aligned}
&C: e_1 \to e_2 \to -e_3 \to -e_1 \to -e_2 \to e_3, \\
&C: e_4 \to -e_4.
\end{aligned}
\label{orbits1}
\end{equation}
 Note that the eigenvalues of C acting in root space are $1,3,5,3$ and are called exponents of the algebra.
It is obvious that $C$ divides the root space into orbits. It is not difficult to check that each orbit contains only one simple root and the orbits are not intersecting.
When constructing the basis \eqref{basis1} we will select one element from each orbit, namely, the simple roots $\alpha_i$.

In the algebra $C$ is realized as an inner automorphism , i.e. a similarity transformation
\begin{equation}
C(X) = c X c^{-1} \label{sim1} , \qquad X \in D_4^{(1)},
\end{equation}
where $c$ belongs to the corresponding group. In order to find $c$ we will look at the typical (defining) representation of $D_4$.
From \eqref{orbits1} we can infer that
\begin{equation}
c
\begin{pmatrix}
e_1 \\
e_2 \\
e_3 \\
e_4 \\
-e_4 \\
-e_3 \\
-e_2 \\
-e_1
\end{pmatrix}
=
\begin{pmatrix}
a_1 e_2 \\
a_2 (-e_3) \\
a_3 e_1 \\
a_4 (-e_4) \\
a_5 (e_4 )\\
a_6 (-e_1)\\
a_7 (-e_3) \\
a_8 (-e_2)
\end{pmatrix}, \label{trep1}
\end{equation}
for some coefficients $a_i$.
The following must also be true
\begin{equation}
c S c^T S = \mathbb{I}     ,\qquad c^6= \pm \mathbb{I}, \label{req1}
\end{equation}
i.e. the matrix $c$ must be a group element and it must give the Coxeter automorphism (the Coxeter number for $D_4^{(1)}$ is 6).
From \eqref{trep1} and \eqref{req1} we find that one possible choise of $c$ is given by
\begin{equation}
c =
\begin{pmatrix}
0 & -1 & 0 & 0 & 0 & 0 & 0 & 0 \\
0 & 0 & 0 & 0 & 0 & -1 & 0 & 0 \\
1 & 0 & 0 & 0 & 0 & 0 & 0 & 0  \\
0 & 0 & 0 & 0 & -1 & 0 & 0 & 0 \\
0 & 0 & 0 & -1 & 0 & 0 & 0 & 0 \\
0 & 0 & 0 & 0 & 0 & 0 & 0 & 1 \\
0 & 0 & 1 & 0 & 0 & 0 & 0 & 0 \\
0 & 0 & 0 & 0 & 0 & 0 & 1 & 0
\end{pmatrix}.
\label{coxeter1}
\end{equation}
%
%
\newcolumntype{C}[1]{>{\centering\arraybackslash$}m{#1}<{$}}
\newlength{\mycolwd}
\settowidth{\mycolwd}{$-\omega^2 x_4$}
\section{Basis in $D_4^{(1)}$}
Every element of $D_4^{(1)}$ is of the form
\begin{equation}
X=\sum_{k= n}^{m} X^{(k)} \lambda^{k}, \quad n,m \in \mathbb{Z},
\end{equation}
where $X^{(k)} \in \mathfrak{g}^{(k \mbox{ mod } 6)}$.
For the elements $X^{(k)}$ we have
\begin{equation}
 X^{(k)}=\sum_{i=1}^{4} x_i  \mathcal{E}_{i}^{(k)} + \sum_{j=1}^{s}  x_{4+s} \mathcal{H}^{(k)}_{s},
\end{equation}
 where $s$ is the multiplicity of the exponent $k$. The explicit form of $X^{(k)}$ is
\begin{equation}
X^{(0)}=
{\footnotesize
\left (
\begin{array}{*{9}{@{}C{\mycolwd}@{}}}
0 & x_1 & x_1 & -x_4 & -x_3 & x_2 & -x_2 & 0 \\
-x_1 & 0 & x_2 & -x_3 & -x_4 & x_1 & 0 & -x_2 \\
-x_1 & -x_2 & 0 & x_3 & x_4 & 0 & x_1 & -x_2  \\
x_4 & x_3 & -x_3 & 0 & 0 & x_4 & x_4 & -x_3 \\
x_3 & x_4 & -x_4 & 0 & 0 & x_3 & x_3 & -x_4 \\
-x_2 & -x_1 & 0 & -x_4 & -x_3 & 0 & x_2 & -x_1 \\
x_2 & 0 & -x_1 & -x_4 & -x_3 & -x_2 & 0 & x_1 \\
0 & x_2 & x_2 & x_3 & x_4 & x_1 & -x_1 & 0
\end{array}
\right )
},\label{g0}
\end{equation}
\begin{equation}
X^{(1)}=
{\footnotesize
\left (
\begin{array}{*{9}{@{}C{\mycolwd}@{}}}
2 x_5 & x_1 & \omega^2 x_1 & -\omega x_4 & -\omega x_3 & -\omega x_2 & \omega^2 x_2 & 0 \\
x_1 & 2 \omega x_5 & x_2 & -\omega^2 x_3 & -\omega^2 x_4 & \omega x_1 & 0 & \omega^2 x_2 \\
\omega^2 x_1 & x_2 & -2 \omega^2 x_5 & x_3 & x_4 & 0 & \omega x_1 &  \omega x_2  \\
-\omega x_4 & -\omega^2 x_3 & x_3 & 0 & 0 & x_4 & \omega^2 x_4 & -\omega x_3 \\
-\omega x_3 & -\omega^2 x_4 & x_4 & 0 & 0 & x_3 & \omega^2 x_3 & -\omega x_4 \\
-\omega x_2 & \omega x_1 & 0 & x_4 & x_3 & 2\omega^2 x_5 & x_2 & -\omega^2 x_1 \\
\omega^2 x_2 & 0 & \omega x_1 & \omega^2 x_4 & \omega^2 x_3 & x_2 & -2 \omega x_5 & x_1 \\
0 & \omega^2 x_2 & \omega x_2 & -\omega x_3 & -\omega x_4 & -\omega^2 x_1 & x_1 & -2 x_5
\end{array}
\right )
},\label{g1}
\end{equation}
\begin{equation}
X^{(2)}=
{\footnotesize
\left (
\begin{array}{*{9}{@{}C{\mycolwd}@{}}}
0 & x_1 & -\omega x_1 & -\omega^2 x_4 & -\omega^2 x_3 & \omega_2 x_2 & \omega x_2 & 0 \\
-x_1 & 0 & x_2 & \omega x_3 & \omega x_4 & \omega^2 x_1 & 0 & \omega x_2 \\
\omega x_1 & -x_2 & 0 & x_3 & x_4 & 0 & \omega^2 x_1 & -\omega^2 x_2  \\
\omega^2 x_4 & -\omega x_3 & -x_3 & 0 & 0 & x_4 & -\omega x_4 & -\omega^2 x_3 \\
\omega^2 x_3 & -\omega x_4 & -x_4 & 0 & 0 & x_3 & -\omega x_3 & -\omega^2 x_4 \\
-\omega^2 x_2 & -\omega^2 x_1 & 0 & -x_4 & -x_3 & 0 & x_2 & \omega x_1 \\
-\omega x_2 & 0 & -\omega^2 x_1 & \omega x_4 & \omega x_3 & -x_2 & 0 & x_1 \\
0 & -\omega x_2 & \omega^2 x_2 & \omega^2 x_3 & \omega^2 x_4 & -\omega x_1 & -x_1 & 0
\end{array}
\right )
},\label{g2}
\end{equation}
\begin{equation}
X^{(3)}=
{\footnotesize
\left (
\begin{array}{*{9}{@{}C{\mycolwd}@{}}}
2x_5 & x_1 & x_1 & x_4 & x_3 & x_2 & x_2 & 0 \\
x_1 & -2x_5 & x_2 & -x_3 & -x_4 & -x_1 & 0 & x_2 \\
x_1 & x_2 & -2x_5 & x_3 & x_4 & 0 & -x_1 & -x_2  \\
x_4 & -x_3 & x_3 & 6x_6 & 0 & x_4 & x_4 & x_3 \\
x_3 & -x_4 & x_4 & 0 & -6x_6 & x_3 & x_3 & x_4 \\
x_2 & -x_1 & 0 & x_4 & x_3 & 2x_5 & x_2 & -x_1 \\
x_2 & 0 & -x_1 & x_4 & x_3 & x_2 & 2 x_5 & x_1 \\
0 & x_2 & -x_2 & x_3 & x_4 & -x_1 & x_1 & -2 x_5
\end{array}
\right )
},\label{g3}
\end{equation}
\begin{equation}
X^{(4)}=
{\footnotesize
\left (
\begin{array}{*{9}{@{}C{\mycolwd}@{}}}
0 & x_1 & \omega^2 x_1 & \omega x_4 & \omega x_3 & -\omega x_2 & -\omega^2 x_2 & 0 \\
-x_1 & 0 & x_2 & -\omega^2 x_3 & -\omega^2 x_4 & -\omega x_1 & 0 & -\omega^2 x_2 \\
-\omega^2 x_1 & -x_2 & 0 & x_3 & x_4 & 0 & -\omega x_1 & \omega x_2  \\
-\omega x_4 & \omega^2 x_3 & -x_3 & 0 & 0 & x_4 & \omega^2 x_4 & \omega x_3 \\
-\omega x_3 & \omega^2 x_4 & -x_4 & 0 & 0 & x_3 & \omega^2 x_3 & \omega x_4 \\
\omega x_2 &  \omega x_1  & 0 & -x_4 & -x_3 & 0 & x_2 & -\omega^2 x_1 \\
\omega^2 x_2 & 0 & \omega x_1 & -\omega^2 x_4 & -\omega^2 x_3 & -x_2 & 0 & x_1 \\
0 & \omega2 x_2 & -\omega x_2 & -\omega x_3 & -\omega x_4 & \omega^2 x_1 & -x_1 & 0
\end{array}
\right )
},\label{g4}
\end{equation}
\begin{equation}
X^{(5)}=
{\footnotesize
\left (
\begin{array}{*{9}{@{}C{\mycolwd}@{}}}
2 x_5 & x_1 & -\omega x_1 & \omega^2 x_4 & \omega^2 x_3 & \omega^2 x_2 & -\omega x_2 & 0 \\
x_1 & -2 \omega^2 x_5 & x_2 & \omega x_3 & \omega x_4 & -\omega^2 x_1 & 0 & -\omega x_2 \\
-\omega x_1 & x_2 & 2 \omega x_5 & x_3 & x_4 & 0 & -\omega^2 x_1 & -\omega^2 x_2  \\
\omega^2 x_4 & \omega x_3 & x_3 & 0 & 0 & x_4 & -\omega x_4 & \omega^2 x_3 \\
\omega^2 x_3 & \omega x_4 & x_4 & 0 & 0 & x_3 & -\omega x_3 & \omega^2 x_4 \\
\omega^2 x_2 & -\omega^2 x_1 & 0 & x_4 & x_3 & -2\omega x_5 & x_2 & \omega x_1 \\
-\omega x_2 & 0 & -\omega^2 x_1 & -\omega x_4 & -\omega x_3 & x_2 & 2 \omega^2 x_5 & x_1 \\
0 & -\omega x_2 & -\omega^2 x_2 & \omega^2 x_3 & \omega^2 x_4 & \omega x_1 & x_1 & -2 x_5
\end{array}
\right )
}.\label{g5}
\end{equation}
Here $\omega= e^{\frac{2 \pi i}{6}} = \frac{1}{2} + i \frac{1}{2} \sqrt{3}$.

\section{The inverse of $\mbox{ad}_J$}

Since $J$ is an element of the Cartan subalgebra $\mathfrak{h}$, the eigenvalues of $\mbox{ ad}_J$ are given by:
\begin{equation}
\big[  J ,  E_{\alpha} \big] = \alpha(J)   E_{\alpha},
\end{equation}
where the root $\alpha \in \Delta$. The root system $\Delta$ of $D_4$ splits into positive and negative roots $\Delta=\Delta^+\cup \Delta^-$,
so that if $\alpha \in \Delta^+ $ then $-\alpha \in \Delta^- $.
In particular
\[ \Delta^+ \equiv \{  e_i \pm e_j, \quad i,j=1..4,\  i<j \}. \]
The eigenvalues of $\ad_J$ are given by:
\[ \alpha(J) \equiv (\alpha, \vec{J}), \qquad \alpha \in \Delta ,\]
where $\vec{J}$ is the vector dual to the Cartan element $J$ (see \eqref{J}), so $\vec{J} = e_1+ \omega e_2 + \omega^5 e_3$. As a result, the
characteristic polynomial of $\ad_J$ is:
\begin{equation}\label{eq:cpJ}\begin{split}
P(\lambda) \equiv \prod_{\alpha\in\Delta}^{} (\lambda -\alpha(J)) =  \prod_{\alpha\in\Delta^+}^{} (\lambda^2 -(\alpha(J))^2).
\end{split}\end{equation}
Using the explicit form of $\vec{J}$ and $\Delta^+$, and the fact that $\omega=\exp(2\pi i/6)$ after some simplifications we get:
\begin{equation}\label{eq:cpJ2}\begin{split}
 P(\lambda) = (\lambda^6 -1)^3 (\lambda^6 +27).
\end{split}\end{equation}

Next we note that $\ad_J$ has an inverse only on its image, i.e. $\ad_J^{-1}$ is defined on $D_4\backslash \mathfrak{h}_{D_4}$. On this subspace of
$D_4$ $\ad_J$ satisfies the minimal characteristic polynomial:
\begin{equation}
\ad_J^{12} + 26\ad_J^6 - 27 = 0.
\end{equation}
Thus we find:
\begin{equation}
\mbox{ad}_J^{-1}=\frac{1}{27} \left(26 \mbox{ ad}_J^{5}+ \mbox{ad}_J^{11} \right).
\end{equation}

\end{document}